\newtheorem{theorem}{Theorem}
\newtheorem{remark}{Remark}
\def\tsc#1{\csdef{#1}{\textsc{\lowercase{#1}}\xspace}}
\begin{document}
\let\WriteBookmarks\relax
\def\floatpagepagefraction{1}
\def\textpagefraction{.001}

\shorttitle{Epidemic Dynamics in Homes and Destinations under Recurrent Mobility Patterns}

\shortauthors{Yusheng Li et~al.}

\title [mode = title]{Epidemic Dynamics in Homes and Destinations under Recurrent Mobility Patterns}      

\affiliation[1]{organization={College of Artificial Intelligence},
    addressline={Southwest University}, 
    city={Chongqing},
    postcode={400715}, 
    country={PR China}}

\affiliation[2]{
    organization={Faculty of Natural Sciences and Mathematics, University of Maribor},
    addressline={Koroška cesta 160},
    city={Maribor},
    postcode={2000},
    country={Slovenia}
}

\affiliation[3]{
    organization={Community Healthcare Center Dr. Adolf Drolc Maribor},
    addressline={Ulica talcev 9},
    city={Maribor},
    postcode={2000},
    country={Slovenia}
}

\affiliation[4]{
    organization={Department of Physics, Kyung Hee University},
    addressline={26 Kyungheedae-ro},
    city={Seoul},
    postcode={02447},
    state={Dongdaemun-gu},
    country={Republic of Korea}
}

\affiliation[5]{
    organization={Complexity Science Hub},
    addressline={Metternichgasse 8},
    city={Vienna},
    postcode={1030},
    country={Austria}
}

\affiliation[6]{
    organization={University College, Korea University},
    addressline={145 Anam-ro},
    city={Seoul},
    postcode={02841},
    state={Seongbuk-gu},
    country={Republic of Korea}
}

\author[1]{Yusheng Li}[style=Chinese]

\author[1]{Yichao Yao}[style=Chinese]

\author[1]{Minyu Feng}[style=chinese,orcid=0000-0001-6772-3017]
\cormark[1]
\cortext[cor1]{Corresponding author}

\ead{myfeng@swu.edu.cn}

\author[2]{Tina P. Benko}[style=Chinese]

\author[2,3,4,5,6]{Matja{\v z} Perc}[style=Chinese]
\cormark[1]

\author[3]{Jernej Zavr{\v s}nik}[style=Chinese]


\begin{abstract}
    The structure of heterogeneous networks and human mobility patterns profoundly influence the spreading of endemic diseases. In small-scale communities, individuals engage in social interactions within confined environments, such as homes and workplaces, where daily routines facilitate virus transmission through predictable mobility pathways. Here, we introduce a metapopulation model grounded in a Microscopic Markov Chain Approach to simulate susceptible--infected--susceptible dynamics within structured populations. There are two primary types of nodes, homes and destinations, where individuals interact and transmit infections through recurrent mobility patterns. We derive analytical expressions for the epidemic threshold and validate our theoretical findings through comparative simulations on Watts--Strogatz and Barab\'asi--Albert networks. The experimental results reveal a nonlinear relationship between mobility probability and the epidemic threshold, indicating that further increases can inhibit disease transmission beyond a certain critical mobility level.
\end{abstract}

\begin{keywords}
  Epidemic modeling 
  \sep Human mobility 
  \sep Structured populations
  \sep Markov chain
  \sep Complex networks
\end{keywords}

\maketitle

\section{Introduction}
In recent years, the increasing importance of human mobility has played a critical role in the dynamic spreading of infectious diseases, particularly in small-scale communities where individuals follow routine movement patterns \cite{n1,n2,n3,r1,p1,li2022network}.
The distinctive structures of social networks and the recurrent mobility patterns of individuals are closely intertwined with the spatiotemporal dynamics of epidemics \cite{n4, n5, r2, r3,z0}.
Throughout history, epidemics such as the Black Death \cite{r4} were primarily propagated along specific trade routes or confined urban areas, which made the spatial and temporal spreading of diseases somewhat predictable. 
However, with the acceleration of economic globalization and expansion of transportation networks, epidemics nowadays exhibit characteristics of rapid dissemination and broader geographic reach \cite{z1,z2, z3}. The COVID-19 pandemic exemplified this phenomenon by simultaneously manifesting infection cases across multiple regions worldwide \cite{r5,n6}. Compared to previous centuries, the scale and speed of epidemic propagation have changed dramatically. The study of the spatial propagation of pathogens through the reaction-diffusion process has gained extensive application \cite{n7,n8,n9}. 
In the context of the complex networks, the reaction phase represents the transmission of pathogens within a subpopulation or a patch through direct contact, while the diffusion phase corresponds to the movement of infected hosts across different subpopulations via connected pathways or transportation links \cite{r6}.

The study of disease transmission dynamics within reaction-diffusion systems is commonly referred to as the metapopulation model \cite{r7,r8,r9}. The nodes of the metapopulation network represent population groups, while the links signify the migration of individuals between different patches. 
Initially introduced by Richard Levins \cite{r10} in the field of population biology, Anderson and May \cite{r11} were the first to apply the metapopulation concept to the SIR model in epidemiological studies, offering an effective framework for investigating spatial disease transmission. 
To gain a deeper understanding of the micro-scale transmission processes of diseases within local communities and the long-distance spreading of diseases due to human mobility, a series of metapopulation models based on the microscopic Markov chain approach have been employed \cite{r12,r13}. These models implement the dynamic process which involves movement-interaction-return (MIR) three stages. At the start of each time step, individuals either remain stationary or move to neighboring patches with a certain probability, followed by interactions and the spreading of the virus within each patch, and finally, return to their original locations to initiate the next cycle. 

While metapopulation models proved to be valuable in simulating recurrent human mobility patterns, many existing models assume that interactions are homogeneous mixing within communities, which oversimplifies the complex, heterogeneous nature of human contact patterns \cite{r14}. In reality, human social interactions are highly structured and unique, depending on both home and workplace environments.

In addition to models focusing on single network dynamics, recent research has explored structured population network models that account for the interplay between human mobility and social contacts \cite{r16,r17,r18}. 
Examples include heterogeneous social contact networks \cite{r19}, bipartite networks distinguishing between daytime and nighttime infections \cite{r20}, and models linking homes and public spaces in a bipartite structure \cite{r21}.
However, the conventional assumption that nodes with similar attributes exhibit identical statistical properties fails to capture the heterogeneity of human environments and the cyclical nature of commuting patterns, thus addressing the relationships between these factors and the evolving dynamics of pandemics as an open challenge \cite{r15}.

Building on these assumptions, in this work we utilize a Microscopic Markov Chain Approach (MMCA) to simulate the Susceptible-Infected-Susceptible (SIS) epidemic dynamics in structured heterogeneous populations characterized by recurrent mobility patterns. 
The network consists of two types of nodes: homes (e.g., hotels, dormitories), where individuals interact based on social ties, and destinations (e.g., workplaces, offices) where interactions are assumed to follow a well-mixed approximation. At each time step, individuals migrate between these two patch types following recurrent movement patterns, and infections occur via different interaction modes depending on the location.
The goal of this paper is to investigate how heterogeneous network structures and key factors such as mobility probability influence disease transmission. We provide a detailed description of the heterogeneous structured networks with recurrent mobility patterns and analytically derive the epidemic threshold, offering insights into the dynamics of epidemic spreading within small-scale communities.

The rest of the paper is presented as follows: In Section \ref{sec:model_description}, we introduce the formulation of our model by Markovian equations in constrained areas with recurrent mobility patterns. Subsequently, we describe the detailed infection dynamics at two different types of locations and derive the epidemic threshold. We conduct simulations and analyze the experimental results in Section \ref{sec:numerical_results}. Finally, we summarize the discussion and findings in Section \ref{sec:conclusion}.

\section{EPIDEMIC SPREADING IN STRUCTURED POPULATIONS UNDER RECURRENT MOBILITY PATTERNS}
\label{sec:model_description}

In our metapopulation model for epidemic spreading in structured populations, we consider \( N \) subpopulations, each comprising two types of locations: destinations and homes. Each subpopulation \( i \) has a population of \( n_i \) agents. In this framework, each node in the network represents a subpopulation (or patch) with varying population sizes. Destinations, such as workplaces or shopping centers, serve as interaction hubs where individuals engage in regular and frequent activities, and we model these interactions using a well-mixed approximation. In contrast to previous studies \cite{r20,r21} that assumed homogeneous subpopulations, we propose that interactions within homes are more constrained, occurring within personal social networks where individuals can only be infected by their immediate neighbors.

An important assumption in our model is that agents maintain their inherent social attributes, such as their degree of connectivity, when migrating to another subpopulation. Specifically, agents with a social connectivity of \( k \) (i.e., \( k \) neighbors) retain this same connectivity even after moving to a different community. This assumption helps explain the phenomenon of super-spreaders—individuals with higher-than-average contact rates due to their biological and behavioral traits, leading to a disproportionately high number of infections \cite{woolhouse1997heterogeneities}. Their elevated contact rates, driven by unique social behaviors, contribute significantly to the rapid spread of infections.

Building on this, each individual in home \( i \) interacts with approximately \( k_i \) neighbors, and these connections remain fixed, reflecting the stable nature of social ties in confined living environments. The movement of agents between adjacent subpopulations is governed by the weighted flow matrix \( W \), where \( W_{ij} \) indicates the connection strength between nodes \( i \) and \( j \). In practical scenarios, such as transportation networks, \( W_{ij} \) corresponds to the transportation throughput of routes connecting different cities.

\subsection{Model Description of Movement-Interaction-Return Patterns in Metapopulations} 
We construct the metapopulation model using the MMCA, which allows for a detailed representation of epidemic dynamics at the node level. This approach was initially introduced by Gómez-Gardenes in 2018 in the context of disease transmission dynamics, highlighting the influence of recurrent mobility patterns on reaction-diffusion processes in networks \cite{r12}.
The proposed model follows the process of Movement-Interaction-Return (MIR) patterns \cite{r13}, which models the dynamic stages of movement, interaction, and return, capturing the recurrent patterns of human commuting behavior more effectively. 

(1) Movement: At each time step, the migration pattern between different patches is determined by the mobility rate matrix $C$. 
The probability that an individual moves from their current location $i$ to patch $j$, with the mobility probability $p$, is proportional to the connection weight of elements $W_{ij}$ in the weighted flows matrix $W$, defined as
\begin{equation}
    C_{ij} = \frac{W_{ij}}{\sum_{j=1}^{N} W_{ij}}.
    \label{eq:Cij}
\end{equation}
Consequently, a fraction \( n_i p \) of agents from patch \( i \) will move to other patches, while the remaining \( n_i (1 - p) \) agents will stay at their current location.

(2) Interaction: Upon completing the movement process, agents engage in interactions within their new subpopulation.
Susceptible individuals have a probability \( \beta \) to become infected through contacting with an infected individual, while infected individuals recover at a rate \( \mu \), returning to the susceptible state. 

(3) Return: After the dynamical state update based on the epidemic model, each agent returns to its original patch and another reaction begins. 

Based on the assumptions above, we employ the classical SIS model to describe the transmission of the disease within the metapopulation network. Specifically, a susceptible individual has a probability $\beta$ of becoming infected upon contact with a contagious agent, while an infected individual will recover with a probability $\mu$ at the start of a time step and return to being susceptible.

Let \( \rho_i(t) \) denote the proportion of infected individuals in patch \( i \) at time step \( t \). The time evolution of \( \rho_i(t) \) is governed by
\begin{equation}
    \rho_i(t+1) = (1 - \mu)\rho_i(t) + (1 - \rho_i(t))\Pi_i(t).
    \label{eq:rho_i}
\end{equation}

Eq. (\ref{eq:rho_i}) can be interpreted as the proportion of infected individuals in patch $i$ at time $t + 1$. The first term on the right-hand side accounts for individuals who remain infected at time $t$ but have not recovered. The second term represents the newly infected individuals associated to patch $i$ during the current time step, determined by the infection probability \( \Pi_i(t) \), which is given by
\begin{equation}
    \Pi_{i}(t) = (1 - p) P_{i}(t) + p \sum_{j=1}^{N} C_{ij} P_{j}(t).
    \label{eq:Pi_i}
\end{equation}

To further refine $ P_i(t) $, the infection probability in (but not necessarily associated with) patch $ i $ at time $ t $, we consider the distinct roles of homes and destinations. \ref{fig:description} illustrates an example of the metapopulation network for epidemic propagation in our model.
On the metapopulation level (shown on the left-hand side), residential locations (homes), depicted by green circles, represent static social contact networks similar to Erd\H os-R\'enyi networks. In contrast, destination locations are assumed to be well-mixed and can be modeled as complete graphs. On the right-hand side of the figure, we show the movement between patch $ i $ and patch $ j $, governed by the probability matrix $ C $.

\begin{figure*}[ht]
    \centering
    \includegraphics [width=0.70\textwidth]{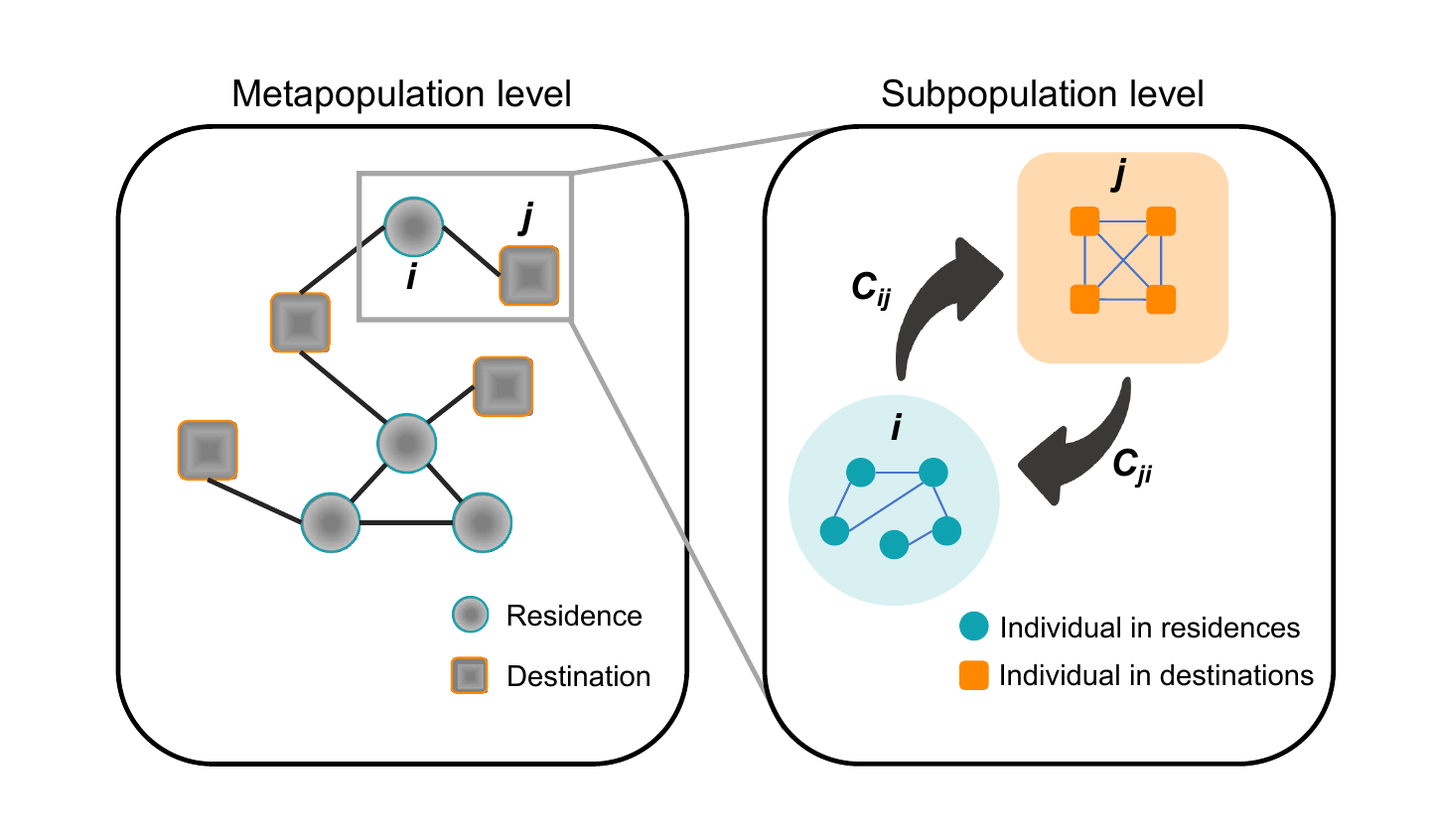}
    \caption{ 
        \textbf{Epidemic spreading in the heterogeneous metapopulation network with recurrent mobility patterns.} In the metapopulation network, each node represents a subpopulation or patch. Green circles represent residential locations (homes), while orange squares represent destination locations. In the subpopulation part of the illustration, solid squares indicate individuals interacting under a well-mixed approximation in destinations, and in homes, solid circles indicate epidemic spreading is driven by social contacts. The proportion of individuals in subpopulation $i$ moving to subpopulation $j$ depends on the weighted directed flow $C_{ij}$. Infection takes place in all the patches independently.
            }
    \label{fig:description}
\end{figure*}%

We denote the infection probabilities in homes and destinations as \( R_i(t) \) and \( D_i(t) \), respectively. In real-world scenarios, each patch can serve as both a home and a destination, with no clear distinction between them. To account for this, we introduce an activity coefficient \( \alpha \) with $0\le \alpha \le 1$ to denote the proportion of homes in the metapopulation network, and \( 1 - \alpha \) represents the proportion of destinations. Importantly, a lower \( \alpha \) value indicates higher activity, as it suggests that more individuals are at destinations (e.g., workplaces) where interactions are more frequent.

Thus, an individual will be infected at time $t$ in its own subpopulation $i$ with the infection probability:
\begin{equation}
    P_i(t) = \alpha R_i(t) + (1 - \alpha) D_i(t),
    \label{eq:P_i}
\end{equation}
where the first term represents the infection probability when patch \( i \) functions as a home, and the second term corresponds to the infection probability when patch \( i \) serves as a destination. This formulation allows us to capture the varying activity levels of subpopulations, with \( \alpha = 0 \) reflecting maximum activity in destinations and \( \alpha = 1 \) representing minimum activity in the limit condition.

\subsection{Infection Probabilities Regarding Two Types of Locations}
\label{sec:home_destination}
In general, the probability that a susceptible individual becomes infected after contact with $k$ contagious agents in a single network can be written as
\begin{equation}
    P(k) = 1 - (1 - \beta)^k,
    \label{eq:P_k}
\end{equation}
where $\beta$ denotes the infection probability.

Given the assumption that individuals at the destination 
$i$ are homogeneously mixed, such that each individual interacts with every other individual within the area, the infection probability for susceptible individuals at this destination can be expressed as 
\begin{equation}
    \begin{aligned}
        D_i(t) & = 1 - \prod_{j=1}^{N} (1 - \beta)^{n_{j \rightarrow i}\rho_j(t)} \\
        & = 1 - (1 - \beta)^{\sum_{j=1}^{N} n_{j \rightarrow i} \rho_j(t)},
        \label{eq:D_i_1}
    \end{aligned}
\end{equation}
where the exponent on the right hand represents the total number of infected individuals coming from all neighboring patches to destination $j$. Here, $n_{j \rightarrow i}$ is the number of individuals moving from node $j$ to destination $i$, denoted as
\begin{equation}
    n_{j \rightarrow i} = (1 - p)n_{i} \delta_{ij} + pC_{ji} n_{j},
    \label{eq:n_j_rightarrow_i}
\end{equation}
where $\delta_{ij} = 1$ when $i = j$ and $\delta_{ij} = 0$ otherwise.

For individuals in home $i$, we begin by calculating the average number of internal neighbors when $n_i p$ individuals leave. Given that each individual has an average of $k_i$ contact edges, the number of internal neighbors remaining when $n_i p$ individuals depart is
\begin{equation}
    N(i, \text{remained}) =  \frac{k_i}{n_i} n_i(1-p)=k_i(1-p),
    \label{eq:X_i_remained}
\end{equation}
where $n_i(1 - p)$ is the number of individuals staying in the home, and self-loops are not considered in the contact network. 

Assuming that the pathogen is uniformly distributed among the population, the total number of infected neighbors for each node becomes
\begin{equation}
    N(i,\text{infected})= (1 - p)k_i\rho_i(t) + p \sum_{j=1}^{N} C_{ji} k_j \rho_j(t),
    \label{eq:total_infected_neighbors1}
\end{equation}
where the first term accounts for infected neighbors remaining in the home, and the second term represents those coming from neighboring areas. 

Thus, the probability that a susceptible individual in home $i$ gets infected is
\begin{equation}
    R_i(t) = 1 - (1 - \beta)^{(1 - p)k_i\rho_i(t) + p \sum_{j=1}^{N} C_{ji} k_j \rho_j(t)}.
    \label{eq:R_i}
\end{equation}

To be more standardlized, we introduce $k_{j \rightarrow i}$ to denote the number of neighbors from subpopulation $j$ interacting with residents of $i$
\begin{equation}
    k_{j \rightarrow i} = (1 - p)k_i \delta_{ij} + p C_{ji} k_j.
    \label{eq:k_j_rightarrow_i}
\end{equation}

Combining Eqs. (\ref{eq:R_i}) and (\ref{eq:k_j_rightarrow_i}), the probability that a susceptible individual in home $i$ becomes infected is
\begin{equation}
    R_i(t) = 1 - (1 - \beta)^{\sum_{j=1}^{N} k_{j \rightarrow i} \rho_j(t)}.
    \label{eq:R_i_final}
\end{equation}

\subsection{Theoretical Derivation of the Epidemic Threshold}
\label{sec:epidemic_threshold}
In this section, we derive the epidemic threshold using Markov equations, which are essential for comprehending disease transmission dynamics in structured populations, particularly when considering human mobility between homes and workplaces. Within the metapopulation network framework with recurrent mobility patterns, our goal is to determine the critical infection rate that enables epidemic spreading. This derivation accounts for two key location types, homes and destinations, and captures the interactions between individuals within these locations, offering a more detailed understanding of the spreading. Finally, the largest eigenvalue of the matrix is employed to calculate the epidemic threshold for the SIS model, as presented in Theorem \ref{theorem1}.

\begin{theorem}
\label{theorem1}
For the classic SIS epidemic model, the epidemic threshold considering two types of locations in the metapopulation network with recurrent mobility is
$\beta_c=\frac{\mu}{\lambda_{max}(\mathbf{M})}$,
where $\lambda_{max}(\mathbf{M})$ is the largest eigenvalue of matrix $\mathbf{M}$, $\mathbf{M} = (\mathbf{m}_{ij})_{N\times N }=\alpha \mathbf{M}^a + (1-\alpha) \mathbf{M}^b$,
$\mathbf{m}^a_{ij} = (1 - p^2) \delta_{ij} k_j + p (1 - p) k_j (C + C^T)_{ij}+ p^2 k_j (C \cdot C^T)_{ij}$  and  $\mathbf{m}^b_{ij} =  (1 - p^2) \delta_{ij} n_j + p (1 - p) n_j (C + C^T)_{ij} 
+ p^2 n_j (C \cdot C^T)_{ij}$.
\end{theorem}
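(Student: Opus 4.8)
The plan is to follow the standard Microscopic Markov Chain route: identify the disease-free fixed point of the closed system \eqref{eq:rho_i}--\eqref{eq:R_i_final}, linearize around it, and extract $\beta_c$ from the requirement that the spectral radius of the linearized map equals one.

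First I would verify that $\rho_i(t)\equiv 0$ for every patch $i$ is a fixed point: if all $\rho_j=0$ then the exponents in \eqref{eq:D_i_1} and \eqref{eq:R_i_final} vanish, so $D_i=R_i=0$, hence $P_i=\Pi_i=0$, and \eqref{eq:rho_i} returns $\rho_i(t+1)=0$. Next I would write $\rho_i(t)=\varepsilon_i(t)$ with $\varepsilon_i\ll 1$ and expand to first order. Since $1-(1-\beta)^{x}=\beta x+O(x^{2})$ as $x\to 0$, equations \eqref{eq:D_i_1} and \eqref{eq:R_i_final} give $D_i(t)\approx\beta\sum_{j}n_{j\rightarrow i}\varepsilon_j(t)$ and $R_i(t)\approx\beta\sum_{j}k_{j\rightarrow i}\varepsilon_j(t)$, so by \eqref{eq:P_i}, $P_i(t)\approx\beta\sum_{j}\big(\alpha\,k_{j\rightarrow i}+(1-\alpha)\,n_{j\rightarrow i}\big)\varepsilon_j(t)$. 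Feeding this through \eqref{eq:Pi_i} and \eqref{eq:rho_i} and discarding the quadratic term $\rho_i(t)\Pi_i(t)$ yields a linear recursion $\vec\varepsilon(t+1)=\big[(1-\mu)\mathbf I+\beta\mathbf M\big]\vec\varepsilon(t)$ for some nonnegative matrix $\mathbf M$ built from $C$, $k$, $n$, $p$ and $\alpha$.

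The core of the argument is to show that this $\mathbf M$ is precisely $\alpha\mathbf M^{a}+(1-\alpha)\mathbf M^{b}$ as stated. Collecting the coefficient of $\varepsilon_j$ shows $\mathbf M=(1-p)\mathbf S+p\,\mathbf C\mathbf S$ with $\mathbf S=\alpha\mathbf S^{a}+(1-\alpha)\mathbf S^{b}$, $\mathbf S^{a}=(k_{j\rightarrow i})_{ij}$ and $\mathbf S^{b}=(n_{j\rightarrow i})_{ij}$. I would then substitute $k_{j\rightarrow i}=(1-p)\delta_{ij}k_j+p\,k_j (C^{T})_{ij}$ from \eqref{eq:k_j_rightarrow_i} (and likewise \eqref{eq:n_j_rightarrow_i}), carry out the matrix product $\mathbf C\mathbf S^{a}$, and regroup the result by how many of the two mobility steps are genuine moves: the ``stay--stay'' part gives the diagonal term, the two ``one-move'' parts combine into the symmetrized term proportional to $(C+C^{T})_{ij}$, and the ``move--move'' part gives the term proportional to $(C\cdot C^{T})_{ij}$, reproducing $\mathbf m^{a}_{ij}$ (and $\mathbf m^{b}_{ij}$ with $n_j$ in place of $k_j$). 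I expect this bookkeeping to be the main obstacle — in particular keeping the transposes straight, since the outflow from a patch is weighted by $C_{ij}$ while the inflow is weighted by $C_{ji}=(C^{T})_{ij}$, and correctly assembling the cross terms; the rest is routine.

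Finally I would close with linear stability theory. The disease-free equilibrium is asymptotically stable iff the spectral radius of $\mathbf J:=(1-\mu)\mathbf I+\beta\mathbf M$ is below $1$. Because every entry of $\mathbf M$ is nonnegative, Perron--Frobenius guarantees that $\rho(\mathbf M)$ is itself an eigenvalue of $\mathbf M$, namely $\lambda_{\max}(\mathbf M)\ge 0$; moreover for $0<\mu\le 1$ and $\beta>0$ any eigenvalue $\nu$ of $\mathbf M$ satisfies $|1-\mu+\beta\nu|\le(1-\mu)+\beta\lambda_{\max}(\mathbf M)$, with equality attained at $\nu=\lambda_{\max}(\mathbf M)$, so $\rho(\mathbf J)=(1-\mu)+\beta\lambda_{\max}(\mathbf M)$. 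Hence the equilibrium loses stability exactly when $(1-\mu)+\beta\lambda_{\max}(\mathbf M)=1$, i.e. at $\beta_c=\mu/\lambda_{\max}(\mathbf M)$, which is the claimed threshold.
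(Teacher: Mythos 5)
Your argument follows essentially the same route as the paper's proof: expand the MMCA equations around the disease-free state using $1-(1-\beta)^x\approx \beta x$, keep only terms linear in the $\epsilon_j$, identify the coupling matrix $\mathbf{M}$, and obtain $\beta_c=\mu/\lambda_{\max}(\mathbf{M})$. The one genuine (and welcome) difference is the closing step: the paper imposes stationarity $\rho_i(t+1)=\rho_i(t)=\epsilon_i$ and reads Eq. (\ref{eq:Pi_i_final}) as the eigenvalue condition $\frac{\mu}{\beta}\vec{\epsilon}=\mathbf{M}\vec{\epsilon}$, leaving implicit why the onset is set by the largest eigenvalue, whereas you retain the time-dependent linear map $\vec{\varepsilon}(t+1)=[(1-\mu)\mathbf{I}+\beta\mathbf{M}]\vec{\varepsilon}(t)$ and use nonnegativity of $\mathbf{M}$ together with Perron--Frobenius to show that its spectral radius equals $(1-\mu)+\beta\lambda_{\max}(\mathbf{M})$; that is a slightly more careful justification of the same conclusion, at no extra cost.

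One concrete caveat concerns the bookkeeping you yourself single out as the main obstacle: carried out as you describe, your decomposition gives
\[
[(1-p)\mathbf{S}^a+p\,C\mathbf{S}^a]_{ij}=(1-p)^2\delta_{ij}k_j+p(1-p)k_j(C+C^T)_{ij}+p^2k_j(C\cdot C^T)_{ij},
\]
so the ``stay--stay'' diagonal coefficient comes out as $(1-p)^2$, in agreement with the paper's own intermediate expansion in Eq. (\ref{eq:Pi_i_threshold}), but not equal to the coefficient $(1-p^2)$ written in the theorem and in Eqs. (\ref{eq:M1_ij})--(\ref{eq:M2_ij}); the difference is $2p(1-p)\delta_{ij}k_j$, and no term of the linearization produces it (it would require $C$ to have unit diagonal, contrary to Eq. (\ref{eq:Cij}) for a flow network without self-loops). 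So your plan, executed correctly, will not literally ``reproduce $\mathbf{m}^a_{ij}$'' as stated; it reproduces the $(1-p)^2$ version, and the $(1-p^2)$ in the statement is best read as a transcription slip. This does not affect the structure of either argument or the final formula $\beta_c=\mu/\lambda_{\max}(\mathbf{M})$, but you should flag the discrepancy rather than assert exact agreement with the stated matrix entries.
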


\begin{proof}
When epidemic spreading reaches a steady state ($t \rightarrow +\infty $), we can get the evolution with $\rho_i(t+1)=\rho_i(t)=\rho_i$. Under the assumption that near the critical onset of the epidemics, the fraction of infected individuals is negligible, we can substitute $\rho_i=\epsilon_i\ll 1$. Eq. (\ref{eq:rho_i}) then reads
\begin{equation}
    \epsilon_i=\epsilon_i(1-\mu)+(1-\epsilon_i)\Pi_i.
    \label{eq:epsilon_i}
\end{equation}

Substituting $\Pi_i$ and $P_i$ according to expressions in Eq. (\ref{eq:Pi_i}) and Eq. (\ref{eq:P_i}), we get
\begin{equation}
    \epsilon_i=\epsilon_i(1-\mu)+(1-\epsilon_i) [ (1-p)P_i+p\sum_{j=1}^{N}C_{ij}P_j  ] 
    \label{eq:epsilon_i_final},
\end{equation}
where
\begin{equation}
  \begin{aligned}
      P_i & = \alpha R_i+(1-\alpha )D_i                                            \\
          & = \alpha [1-(1-\beta)^{\sum_{j=1}^{N}k_{j\rightarrow i} \epsilon_j }]+
      \\&(1-\alpha )[1-(1-\beta)^{\sum_{j=1}^{N}n_{j\rightarrow i} \epsilon_j }]
      \label{eq:P_i_RD}.
  \end{aligned}
\end{equation}

Then we say that $\epsilon_i$ is small enough and apply the approximations $(1-\epsilon _i)^{n}\approx 1-n\epsilon _i$. Neglecting the second-order terms of $\epsilon_i$ and substituting $P_i$, $n_{j\rightarrow i}$ and $k_{j\rightarrow i}$ by their respective expressions in Eqs. (\ref{eq:P_i}), (\ref{eq:n_j_rightarrow_i}), (\ref{eq:k_j_rightarrow_i}), it follows that
\begin{equation}
  \begin{aligned}
      \Pi_i = & (1 - p) \Bigg[ \alpha  \beta \sum_{j=1}^{N} \epsilon_j \Big( (1 - p) k_j \delta_{ij} + p C_{ji} k_j  \Big)                  \\
              & + (1 - \alpha) \beta \sum_{j=1}^{N} \epsilon_j \Big( (1 - p) n_j \delta_{ij} + p C_{ji} n_j \Big) \Bigg]                    \\
              & + p \sum_{j=1}^{N} C_{ij} \Bigg[ \alpha  \beta \sum_{l=1}^{N} \epsilon_l \Big( (1 - p) k_l \delta_{jl} + p C_{lj} k_l \Big) \\
              & + (1 - \alpha) \beta \sum_{l=1}^{N} \epsilon_l \Big( (1 - p) n_l \delta_{jl} + p C_{lj} n_l \Big) \Bigg]
          \label{eq:Pi_i_threshold}.
  \end{aligned}
\end{equation}

Additionally, the following equation $\sum_{l=1}^{N}\epsilon_l k_l\delta _{jl}=\epsilon_jk_j$ has been used and substitute in Eq. (\ref{eq:epsilon_i}),
we further obtain
\begin{equation}
  \mu \epsilon_i= \beta\left [  \alpha (\mathbf{M}^a \vec{\epsilon})_i +(1-\alpha )(\mathbf{M}^b\vec{\epsilon})_i \right ]
  \label{eq:Pi_i_final},
\end{equation}
where the entries of matrix $\mathbf{M}^a$ read
\begin{equation}
  \begin{aligned}
    \mathbf{m}^a_{ij}  &= (1 - p^2) \delta_{ij} k_j 
    + p (1 - p) k_j (C + C^T)_{ij} + \\
    & p^2 k_j (C \cdot C^T)_{ij},
  \end{aligned}
  \label{eq:M1_ij}
\end{equation}
and the entries of matrix $\mathbf{M}^b$ are
\begin{equation}
  \begin{aligned}
    \mathbf{m}^b_{ij} & =  (1 - p^2) \delta_{ij} n_j + p (1 - p) n_j (C + C^T)_{ij} \\
    & + p^2 n_j (C \cdot C^T)_{ij}.
  \end{aligned}
  \label{eq:M2_ij}
\end{equation}

Denoting  $\mathbf{M} = \alpha \mathbf{M}^a + (1-\alpha) \mathbf{M}^b$, Eq. (\ref{eq:Pi_i_final}) can be rewritten as
\begin{equation}
    \frac{\mu}{\beta} \epsilon_{i}=\left(\mathbf{M} \vec{\epsilon}\right)_{i}.
\end{equation}
Thus, the epidemic threshold can be obtained by
\begin{equation}
    \beta_c=\frac{\mu}{\lambda_{max}(\mathbf{M})},
    \label{eq:beta_c}
\end{equation}
where $\lambda_{max}(M)$ is the largest eigenvalue of matrix $\mathbf{M}$. Whether the infection process occurs at a residential or destination location, each entry $\mathbf{m}_{ij}$ signifies the entire number of interactions between an individual at node \( i \) and all individuals connected with node \( j \). More specifically,  Eq. (\ref{eq:M1_ij})  denotes the overall average number of contacts within the same home, while Eq. (\ref{eq:M2_ij}) represents the average number of interactions at the destination.
\end{proof}

\begin{remark}
    In the case of $\alpha=0$, the infection process is confined to a single metapopulation network in which interactions within all subpopulations follow a well-mixed approximation.
    Therefore, we obtain the epidemic threshold $\beta_c=\frac{\mu}{\lambda_{max}(\mathbf{M})}$, where each element $\mathbf{m}_{ij} $ of matrix $\mathbf{M}$ can be replaced by $ \mathbf{m}_{ij} =  (1 - p^2) \delta_{ij} n_j + p (1 - p) n_j (C + C^T)_{ij} 
    + p^2 n_j (C \cdot C^T)_{ij}.$

\end{remark}

\section{NUMERICAL RESULTS}
\label{sec:numerical_results}
To systematically validate the effectiveness of our model in small-scale communities, we conducted extensive simulation experiments on metapopulation networks consisting of \( N = 50 \) subpopulations. The number of agents within each patch follows a random distribution between 50 and 150, resulting in an average total of 5000 agents across the entire metapopulation network after over 100 repeated simulations.

In our experiments, we consider two distinct metapopulation networks: the Watts-Strogatz Small-World (WS) network and the Barabási-Albert (BA) network. Given the significant influence of contact network topology on disease transmission within the patches \cite{r8, r11}, we assume a well-mixed network structure for destinations, where individuals interact more frequently. For patches referred to as homes, we employ Erdős-Rényi (ER) networks as the social contact networks, which exhibit a degree distribution following a Poisson distribution when the mobility probability \( p \) is small. To ensure consistency between the two network structures, we maintain the average degree of two different types of network structures nearly identical (e.g., \( \langle k \rangle = 10 \)).
Moreover, a key factor is the migration matrix \( C \), which governs the probability of individuals moving between subpopulations. The elements \( C_{ij} \) represent the probability that an individual in subpopulation \( i \) will move to a neighboring subpopulation \( j \), with values ranging from 0 to 1. The weight matrix \( W \) is designed to reflect the connectivity between communities, with higher values indicating stronger migration links between subpopulations.
The numerical results of the proposed model are derived using Markov equations and evaluated through Monte Carlo (MC) simulations.

The experiments for the proposed model begin with 1\% of infected individuals in each patch, which is also applied in the MC simulations unless otherwise specified. When the system reaches equilibrium after 500 time steps, we compute the average results over 100 simulations to ensure the accuracy of the experiments and eliminate the influence of randomness.

\subsection{Comparision between theoretical results and Monte Carlo simulations}

\begin{figure*}[ht!]
    \centering
    \begin{subfigure}[b]{0.45\textwidth}
        \centering
        \includegraphics[width=\textwidth]{./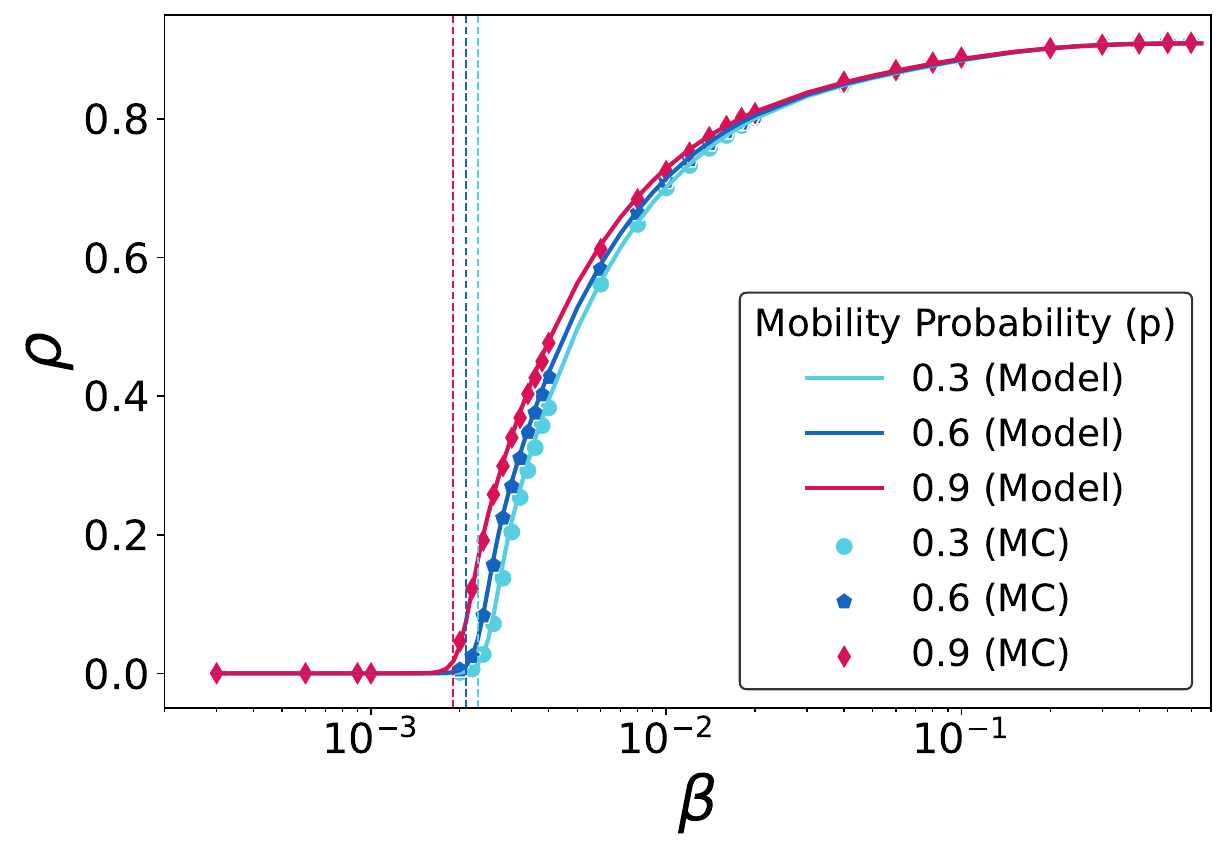}
        \caption{}
        \label{fig_final_1a}
    \end{subfigure}
    \hspace{0.02\textwidth}
    \begin{subfigure}[b]{0.45\textwidth}
        \centering
        \includegraphics[width=\textwidth]{./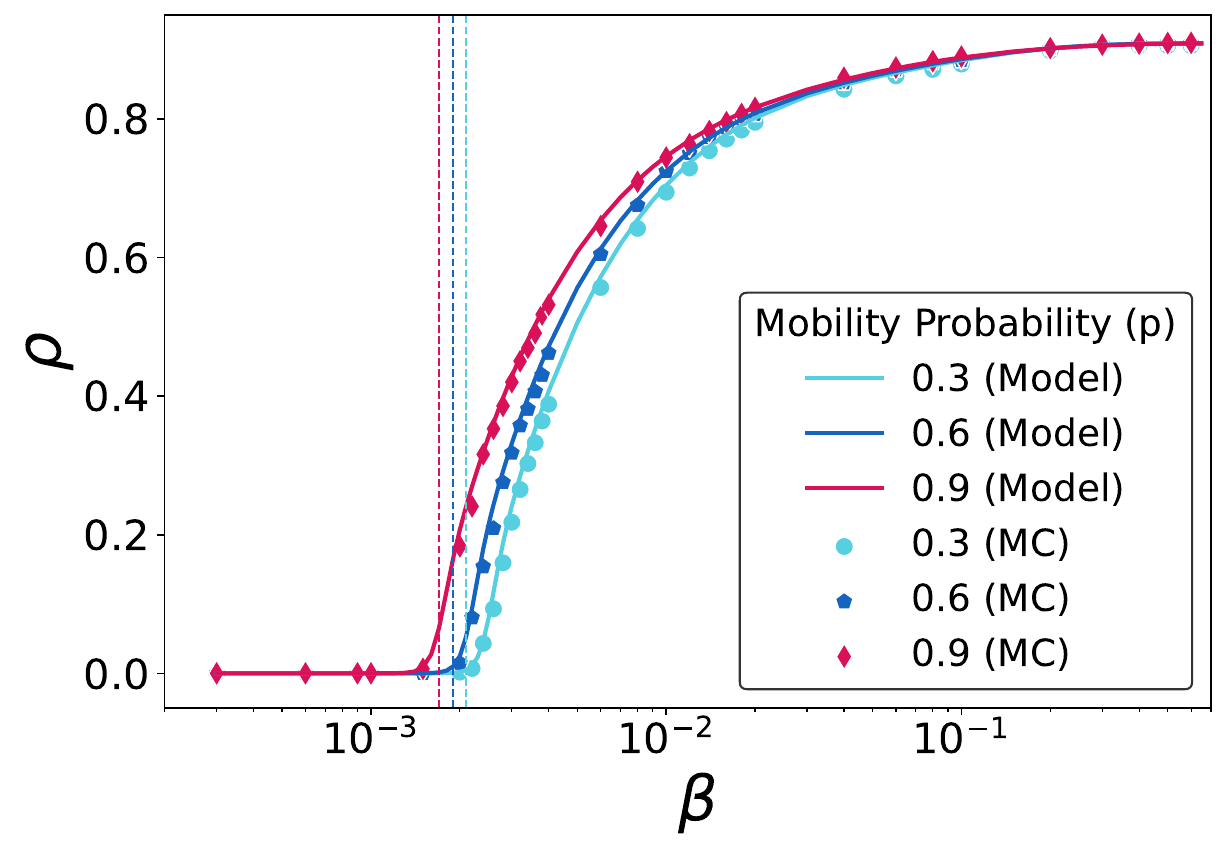}
        \caption{}
        \label{fig_final_1b}
    \end{subfigure}
    \caption{
        \textbf{Comparison of the theoretical results with the Monte Carlo simulations for Watts-Strogatz network and Barabási-Albert network.} The proportion of infected individuals $\rho$ in the steady state as a function of infection rate $\beta$ for three values of the mobility probabilities in two types of networks. Solid lines represent the results of the proposed model, dashed vertical lines denote the epidemic threshold obtained through Eq. (\ref{eq:beta_c}) and the symbols are average results calculated by Monte Carlo Method through 50 simulations. For each graph,  there are a total of 5000 individuals in 50 locations. The activity coefficient is set to \( \alpha = 0.6 \). The recovery rate is set to \( \mu = 0.1 \). (a) Watts-Strogatz network. (b) Barabási-Albert network.
    }
    \label{fig_1}
\end{figure*}

Firstly, we compare the theoretical results obtained by the MMCA method with the MC simulations, as shown in Fig. \ref{fig_1}.
We show the proportion of infected individuals $\rho$ in the steady state of the entire metapopulation network as a function of infection density $\beta$, for three distinct mobility probabilities \( p = 0.3 \), \( 0.6 \), and \( 0.9 \). Theoretical curves are represented by solid lines, while different markers indicate the MC simulation results, with epidemic thresholds marked by dashed vertical lines based on Eq. (\ref{eq:beta_c}). This setup includes 30 residential nodes and 20 destination nodes within the metapopulation, governed by an activity coefficient \( \alpha = 0.6 \).

The results indicate that for lower infection rates \( \beta \), a slight discrepancy emerges between the analytical predictions and simulation results, with the MC outcomes for both WS and BA networks being marginally lower than the theoretical expectations.
Additionally, the BA network exhibits a more pronounced difference in epidemic thresholds for different mobility probabilities, evidenced by the larger gaps between curves in Fig. \ref{fig_final_1b}. 
For sufficiently large infection rates \( \beta \), the theoretical results are in good correspondence with the Monte Carlo simulations in the steady state.
As the mobility probability $p$ increases, the epidemic threshold derived from theoretical analysis progressively decreases relative to the threshold obtained from Eq. (\ref{eq:beta_c}) for both network structures.
Furthermore, comparing Fig. \ref{fig_final_1a} and Fig. \ref{fig_final_1b}, it is evident that for the same mobility probability and infection rate, the epidemic threshold is lower in the BA network than in the WS network. This can be attributed to the heterogeneity in the degree distribution of BA network, which results in a higher maximum eigenvalue, thereby increasing the likelihood of an outbreak in the power-law network structure.

\subsection{Effect of mobility probability and infection rate on final infection density}

\begin{figure*}[ht!]
    \centering
    \begin{subfigure}[b]{0.45\textwidth}
        \centering
        \includegraphics[width=\textwidth]{./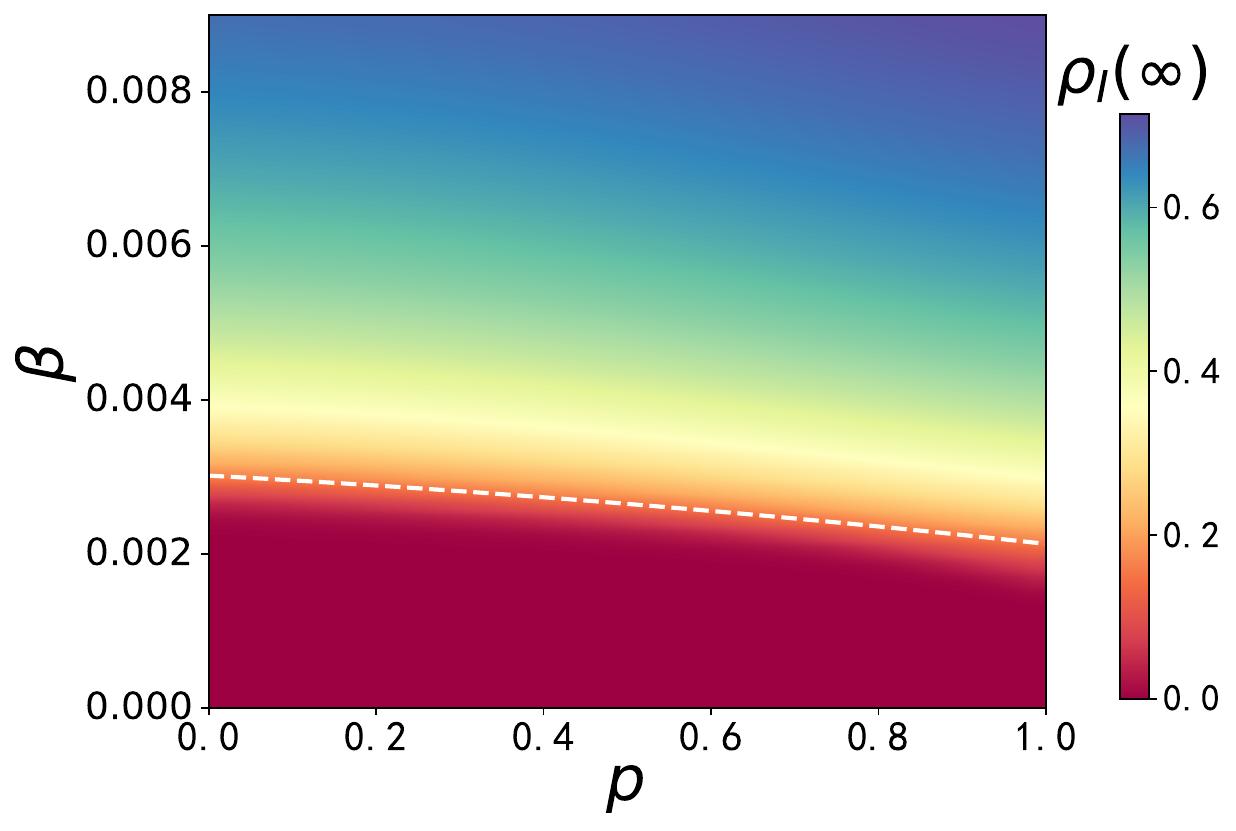}
        \caption{}
        \label{fig_final_2a}
    \end{subfigure}
    \hspace{0.02\textwidth}
    \begin{subfigure}[b]{0.45\textwidth}
        \centering
        \includegraphics[width=\textwidth]{./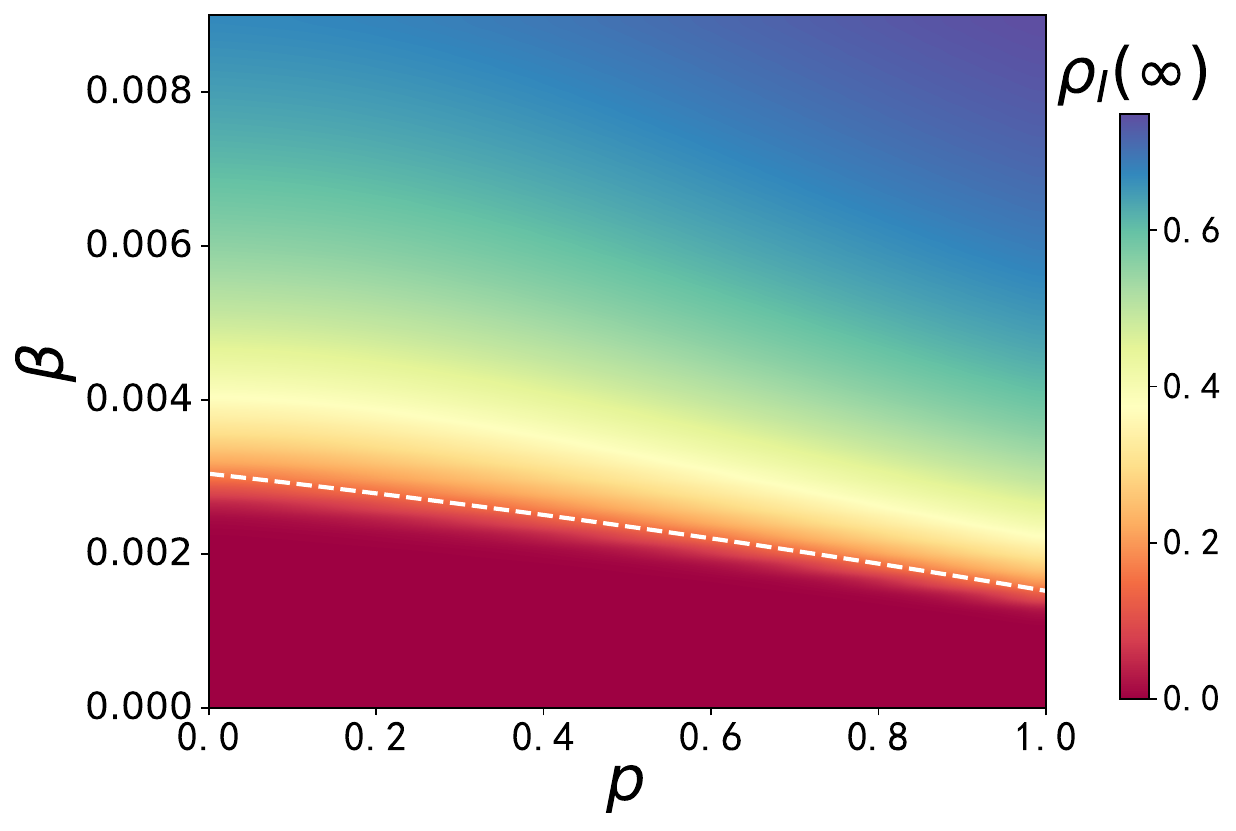}
        \caption{}
        \label{fig_final_2b}
    \end{subfigure}
    \caption{
        \textbf{Final infection density changes with sets of mobility probability and infection rate under different structures of networks.}  The heatmaps show the final infection density obtained through theoretical analysis, while the white dashed lines represent the epidemic thresholds from Monte Carlo simulations. The experiment consists of 5000 individuals distributed across 50 patches, with the recovery rate \( \mu = 0.1 \) and the active coefficient \( \alpha = 0.6 \). (a) Watts-Strogatz network. (b) Barabási-Albert network.}
    \label{fig_2}
\end{figure*}

To further explore the impact of mobility probability \( p \) and infection rate \( \beta \) on the final infection density $\rho_I \left ( \infty \right )$ and epidemic threshold, we accomplish a series of experiments by setting a range of values for $\beta$ and $p$ in Watts-Strogatz and Barabási-Albert metapopulation networks, respectively. As reported in Fig. \ref{fig_2},the results indicate that irrespective of the mobility probability \( p \), the disease will propagate and reach a steady state when the infection rate \( \beta \) exceeds a critical value.
Notably, increased mobility probability accelerates disease spreading, resulting in a lower epidemic threshold. Comparing the white dashed lines between Figs. \ref{fig_final_2a} and \ref{fig_final_2b}, which display the epidemic thresholds obtained from MC simulations, we observe that the epidemic threshold decreases more rapidly in the BA network than that in the WS network. It is the same conclusion as we get in Fig. \ref{fig_1}.
However, this difference becomes negligible as the experimental scale increases in our expanded simulations, with more connections, patches, and a larger population within each patch.

\subsection{Impact of mobility and activity coefficient on epidemic threshold}

\begin{figure*}[ht!]
    \centering
    \begin{subfigure}[b]{0.45\textwidth}
        \centering
        \includegraphics[width=\textwidth]{./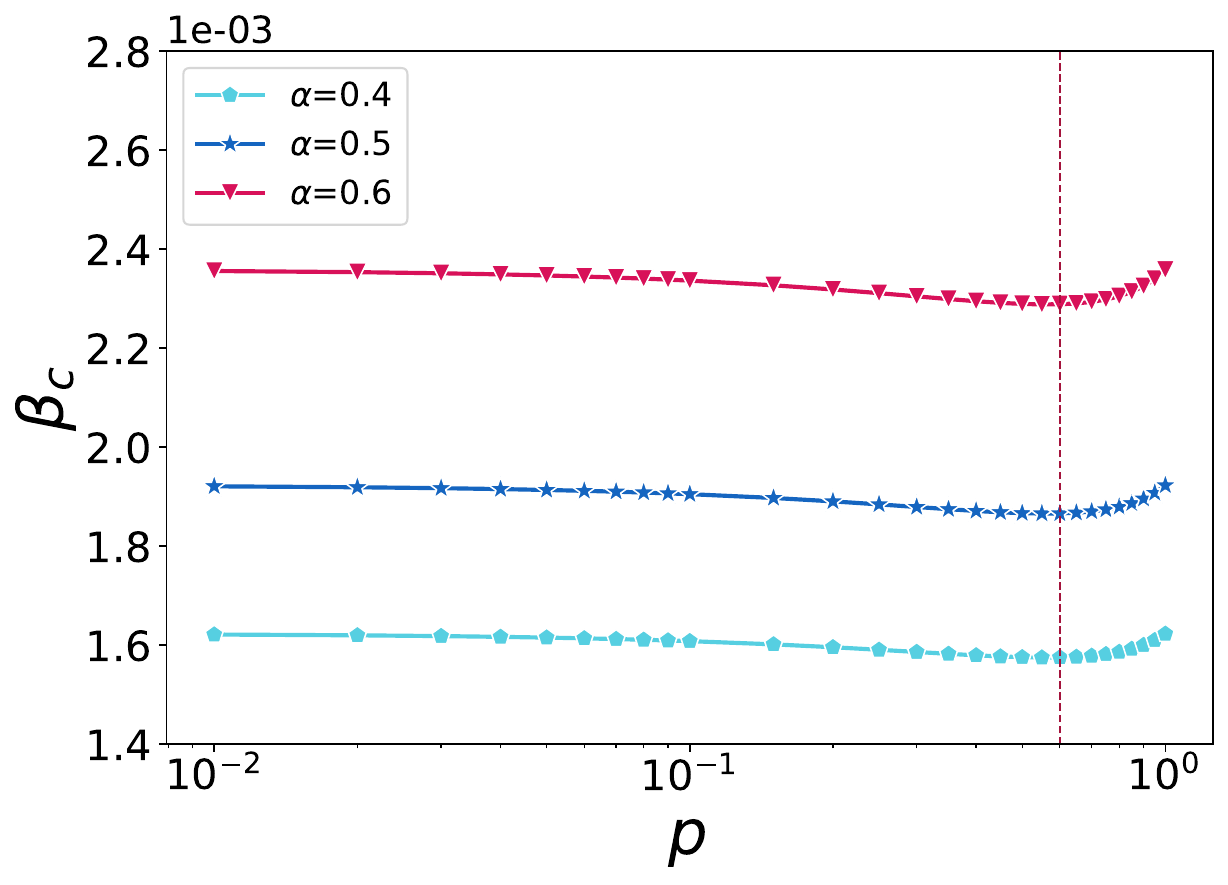}
        \caption{}
        \label{fig_final_3a}
    \end{subfigure}
    \hspace{0.02\textwidth}
    \begin{subfigure}[b]{0.45\textwidth}
        \centering
        \includegraphics[width=\textwidth]{./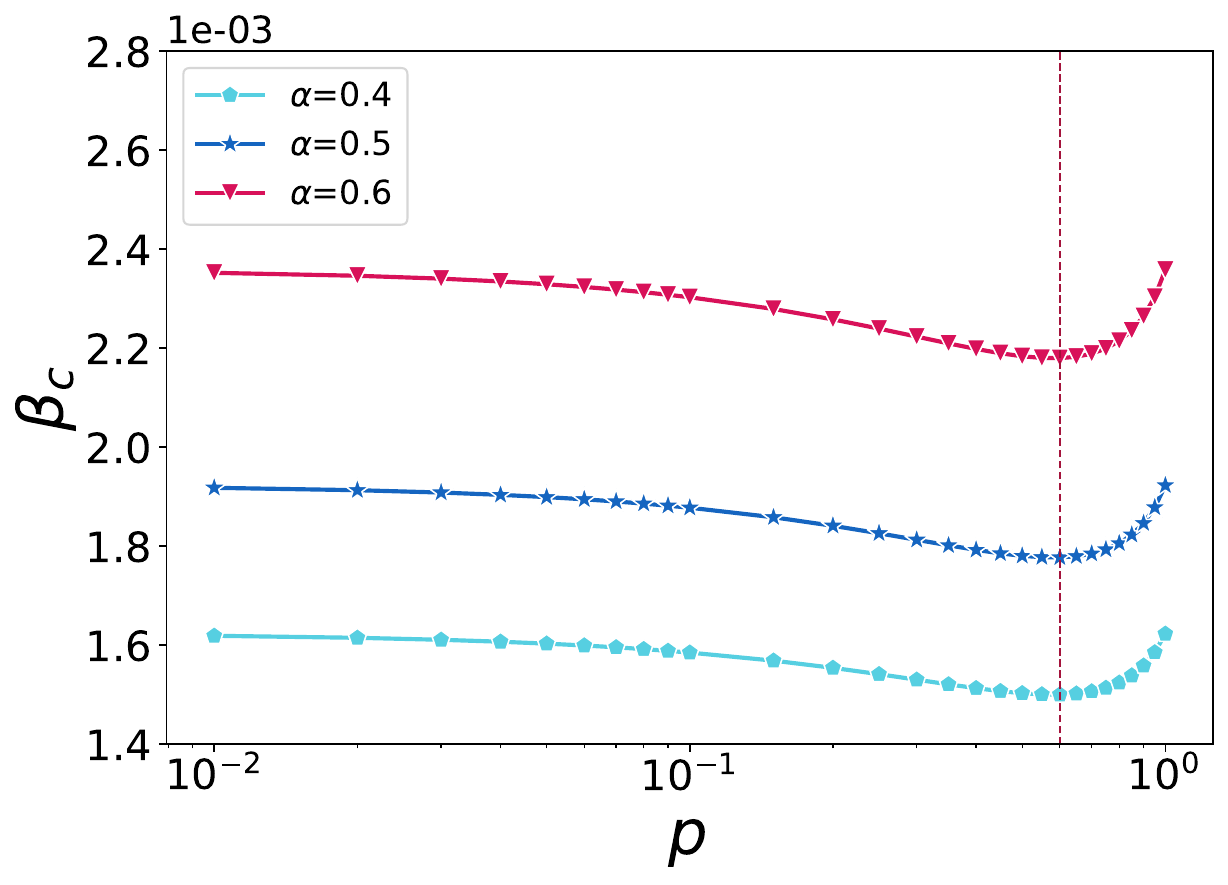}
        \caption{}
        \label{fig_final_3b}
    \end{subfigure}
    \caption{
        \textbf{Comparison of epidemic thresholds \( \beta_c \) with mobility probabilities \( p \) for different values of \( \alpha \).} Each curve shows the trend of \( \beta_c \) as a function of \( p \) across varying values of \( \alpha \), indicating different ratios of home to destination sites. The red dashed lines mark the critical value of \( p \) at which the epidemic threshold is minimized, as derived from Eq. (\ref{eq:beta_c}). The simulation includes 5000 individuals distributed across 50 patches, with a recovery rate \( \mu = 0.1 \). (a) Watts-Strogatz network. (b) Barabási-Albert network.}
    \label{fig_3}
\end{figure*}

\begin{figure*}[ht!]
    \centering
    \begin{subfigure}[b]{0.45\textwidth}
        \centering
        \includegraphics[width=\textwidth]{./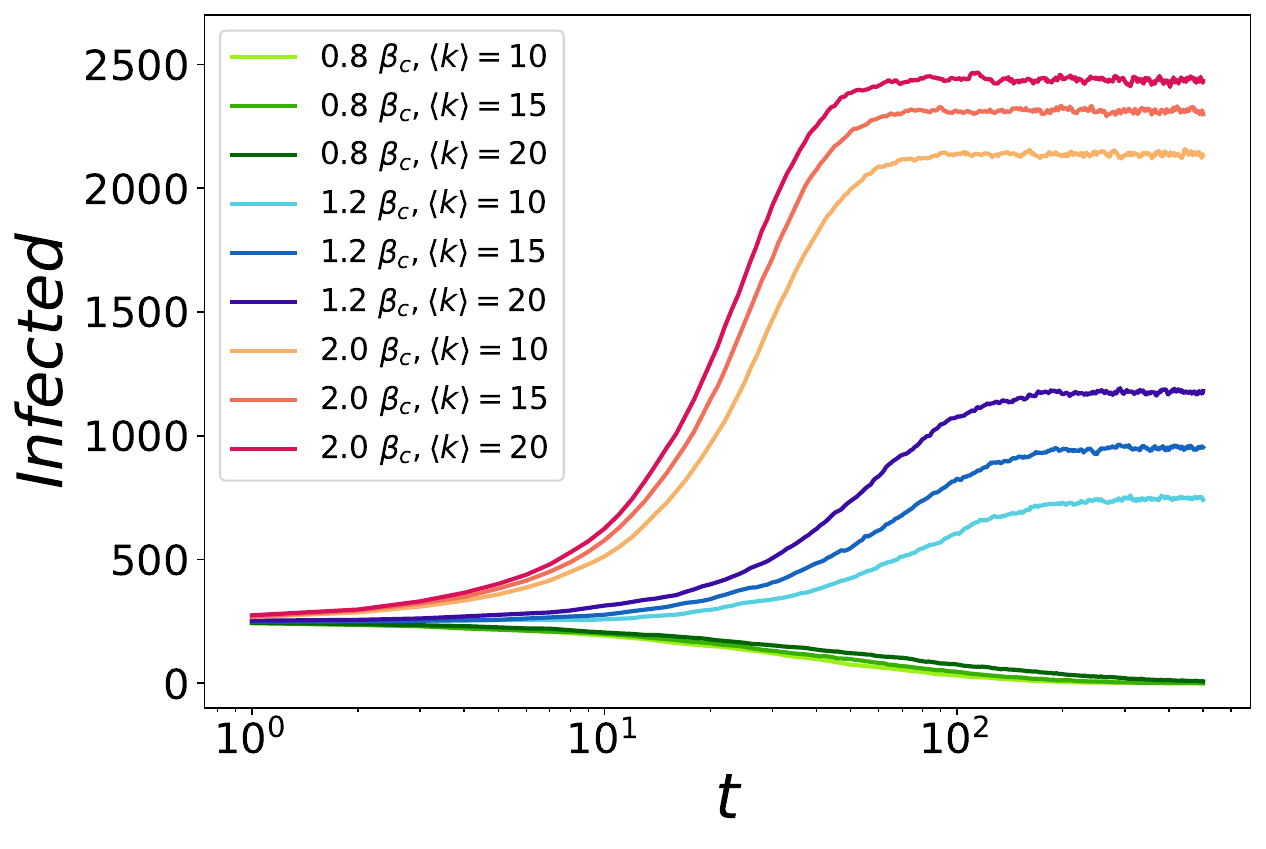}
        \caption{}
        \label{fig_final_4a}
    \end{subfigure}
    \hspace{0.02\textwidth}
    \begin{subfigure}[b]{0.45\textwidth}
        \centering
        \includegraphics[width=\textwidth]{./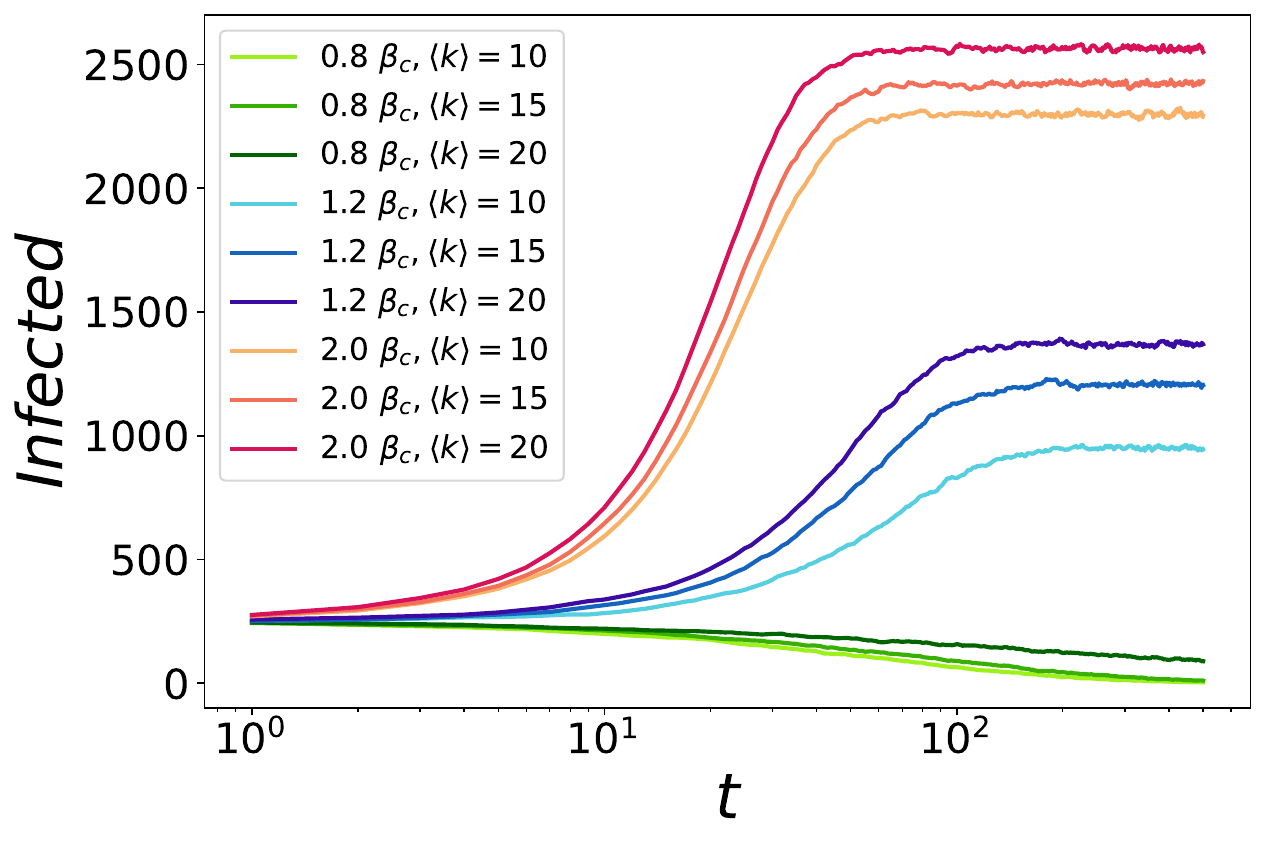}
        \caption{}
        \label{fig_final_4b}
    \end{subfigure}
    \caption{
        \textbf{Evolution of the number of infected individuals for different infection rates $\beta$ and average contact degrees $ \langle k \rangle$ varying over time.} There are 5000 individuals in 50 patches, with recovery rate $\mu$ = 0.1, $\alpha$ = 0.6, and an initial number of 250 infected individuals. The mobility probability is set to \( p = 0.6 \), and the simulations are run for 500 time steps.  Each curve is averaged over 20 MC simulation runs for each combination of \( \beta \) and \( \langle k \rangle \). (a) Watts-Strogatz network. (b) Barabási-Albert network.}
    \label{fig_4}
\end{figure*}

In this subsection, we explore the impact of mobility probability \( p \) and activity coefficient \( \alpha \) on the epidemic threshold \( \beta_c \), examining how variations in different types of locations influence disease transmission dynamics. Individuals are assumed to engage in lower social activity within residential areas but become more active in destinations where daily interactions occur. By adjusting the proportion of time spent at homes versus destinations, represented by the parameter \( \alpha \), we aim to capture the effects of heterogeneous population structures on disease spreading. Fig. \ref{fig_3} presents the relationship between epidemic threshold \( \beta_c \), mobility probability \( p \), and activity coefficient \( \alpha \) in both WS and BA networks. Each curve in the figure represents the epidemic threshold trend for different values of \( \alpha \) (0.4, 0.5, and 0.6), reflecting three scenarios with varying proportions of homes and destinations.

From Fig. \ref{fig_3}, we observe that a lower \( \alpha \) value corresponds to a reduced epidemic threshold \( \beta_c \), with the minimum threshold occurring when the number of destination sites exceeds that of homes (\( R < D \)), represented by \( \alpha = 0.4 \). Conversely, the highest thresholds are observed when \( \alpha = 0.6 \) (\( R > D \)). For small mobility probabilities \( p \) between 0.01 and 0.1, \( \beta_c \) decreases almost linearly in both WS and BA networks. However, as \( p \) exceeds 0.1, we identify a critical value of \( p \), indicated by the red dashed line, where \( \beta_c \) reaches its minimum and subsequently increases as mobility continues to rise. This non-monotonic behavior indicates a counterintuitive finding that higher mobility, beyond a certain threshold, can reduce the risk of epidemic spreading. Similar phenomena are also concluded in \cite{r12, r16}. Furthermore, epidemic thresholds in the BA network are generally lower than those in the WS network.

\subsection{Effect of infection rate and average contact degree on epidemic spreading}
In the last simulation, we investigate the impact of the degree distribution of social contact networks in homes on epidemic spreading. Fig. \ref{fig_4} illustrates the evolution of the total number of infected individuals over time for various infection rates \( \beta \) and average contact degrees \( \langle k \rangle \). 
Regardless of the average degree of the contact network, the epidemic dies out exponentially when \( \beta < \beta_c \), but spreads through the metapopulation network when \( \beta \geq \beta_c \). We can also conclude that, for the same infection rate, such as $\beta_c =1.2$ and $\beta_c=2.0$, the infection increases as the average contact degree increases. Both the speed of disease transmission and the total number of infected individuals are positively correlated with \( \langle k \rangle \), as shown in Fig. \ref{fig_4}.
This relationship can be attributed to the increased contact frequency, which reduces the heterogeneity of social interactions and makes the disease transmission more similar to homogeneous mixing.
Additionally, for a fixed threshold, the number of infected individuals in the steady state is higher in the BA networks than in the WS networks. This result is consistent with previous simulations, where the proposed model demonstrated that epidemics spreading more rapidly and extensively in scale-free networks like the Barabási-Albert network, leading to a larger overall scale of infection.

\section{CONCLUSION}
\label{sec:conclusion}
In this paper, we introduce a heterogeneous metapopulation model that incorporates recurrent mobility patterns within confined areas, capturing the dual roles of homes and destinations in disease transmission. homes are represented as structured social contact networks where individuals interact locally, while destinations are modeled with a well-mixed approximation to account for more active social interactions.
Using the MMCA, we analyze epidemic dynamics within this framework and derive the epidemic threshold in the steady state. We conduct extensive simulations on WS and BA networks, comparing theoretical predictions with MC simulations. Our findings reveal a strong alignment between theoretical and simulation results, with the BA network exhibiting a lower epidemic threshold and faster disease spreading than the WS network under similar conditions. Additionally, we identify a non-monotonic relationship between mobility probability and the epidemic threshold, indicating that mobility may exacerbate epidemic spreading beyond a critical value. Our analysis also demonstrates that when the infection rate is below the epidemic threshold, the disease consistently dies out, irrespective of network topology. In summary, the proposed metapopulation model enhances our understanding of disease spreading driven by human mobility in restricted environments. Its flexibility in accommodating heterogeneous networks with varied population sizes, weighted connections, and diverse structural configurations makes it a valuable tool for investigating epidemic processes in real-world settings and offers practical insights for designing interventions in public health and epidemic control.

Nonetheless, there remain limitations that require further exploration. Our model assumes a constant mobility rate for individuals and does not account for public awareness or behavioral adjustments, which could be influenced by factors such as age, gender, and geographic location. Additionally, analyzing social connections and interaction patterns within each subpopulation in greater detail poses considerable challenges. Future studies will aim to address these limitations by incorporating more complex mobility behaviors and social dynamics into the model.

\section*{ACKNOWLEDGMENTS}
This work was supported by the National Natural Science Foundation of China (NSFC) (Grant No. 62206230),  
the Natural Science Foundation of Chongqing (Grant No. CSTB2023NSCQ-MSX0064),
and the Slovenian Research and Innovation Agency
(Javna agencija za znanstvenoraziskovalno in inovacijsko dejavnost Republike Slovenije) (Grant Nos. P1-0403 and N1-0232).

\bibliographystyle{model1-num-names}

\bibliography{cas-refs}


\end{document}